\newtheorem{defi}{Definition}[section]
\newtheorem{teo}[defi]{Theorem}
\newtheorem{cor}[defi]{Corollary}
\newtheorem{es}[defi]{Example}
\begin{document}
\title{Consensus dynamics and coherence in hierarchical small-world networks}
\author{Yunhua Liao$^{1,2}$, Mohamed Maama$^{3}$ and M.A. Aziz-Alaoui$^{4}$\\
$^1$ Department of Mathematics, Hunan University of Technology\\
and Business, Changsha Hunan 410205, China\\
$^2$ Key Laboratory of Hunan Province for Statistical Learning \\
and Intelligent Computation, Changsha Hunan 410205, China\\
(307156168@qq.com) \\
$^3$ Applied Mathematics and Computational Science Program, \\ KAUST, Thuwal, 23955-6900, KSA\\
(maama.mohamed@gmail.com) \\
$^4$ Normandie Univ, UNIHAVRE, LMAH, FR-CNRS-3335, \\ ISCN, Le Havre 76600, France\\
(aziz.alaoui@univ-lehavre.fr)
}

\date{}
\maketitle
\begin{abstract}
The hierarchical small-world network is a real-world network. It models well the benefit transmission web of the pyramid selling in China and many other countries. In this paper, by applying the spectral graph theory, we study three important aspects of the consensus problem in the hierarchical small-world network: convergence speed, communication time-delay robustness, and network coherence. Firstly, we explicitly determine the Laplacian eigenvalues of the hierarchical small-world network by making use of its treelike structure. Secondly, we find that the consensus algorithm on the hierarchical small-world network converges faster than that on some well-studied sparse networks, but is less robust to time delay. The closed-form of the first-order and the second-order network coherence are also derived. Our result shows that the hierarchical small-world network has an optimal structure of noisy consensus dynamics. Therefore, we provide a positive answer to two open questions of Yi \emph{et al}. Finally, we argue that some network structure characteristics, such as large maximum degree, small average path length, and large vertex and edge connectivity, are responsible for the strong robustness with respect to external perturbations.
\\
\noindent
{\bf Keywords}: Consensus problems, Network coherence, Laplacian spectrum, Convergence speed, Delay robustness, Real-life network model.
\end{abstract}

\baselineskip=0.30in

\section{Introduction}
The consensus problem has been primarily investigated in management science and statistics~\cite{DeGroot74}. And now, it is a challenging and hot research area for multiagent systems~\cite{Saber04}. In these settings, consensus means that all agents reach an agreement on one common issue. Consensus problems have emerged in various disciplines, ranging from distributed computing~\cite{Korniss03}, sensor networks~\cite{Li06}, biological systems~\cite{Sumpter08,Nabet09} to human group dynamics~\cite{Giraldo16}. Due to their broad applications, consensus problems have attracted considerable attention in recent years~\cite{WXD19,WXM19,DZHL20}.

Convergence speed, communication time-delay robustness, and network coherence are three primary aspects of analysing the consensus protocol. Convergence speed measures the time of convergence of the consensus algorithm. It was proved that convergence speed is determined by the second smallest Laplacian eigenvalue $\lambda_2$~\cite{Saber04,Qi18}. Communication time-delay robustness refers to the ability of the consensus algorithm resistant to communication delay between agents, with the allowable maximum delay determined by the largest Laplacian eigenvalue $\lambda_n$~\cite{Saber04,Qi18}. Network coherence quantifies the robustness of the consensus algorithm to stochastic external disturbances, and it is governed by all nonzero Laplacian eigenvalues~\cite{Qi18,Bamieh12}. As we all know, $\lambda_2(G)\geq \lambda_2(H)$ when $H$ is a spanning subgraph of $G$~\cite{Godsil09,bapat}. It means that adding edges to a graph may increase its second smallest Laplacian eigenvalue. Zelazo, Schuler and Allgower~\cite{Zelazo13} provided an analytic characterization of how the addition of edges improves the convergence speed. It is well-known that $\Delta+1\leq \lambda_n \leq 2\Delta$ where $\Delta$ is the maximum vertex degree~\cite{Godsil09}. Wu and Guan~\cite{Wu07} found that we can improve the robustness to time-delay by deleting some edges linking the vertices with the maximum degree. As for network coherence, Summers \emph{et al.}~\cite{Summers15} considered how to optimize the coherence by adding some selected edges. Recently, network coherence on deterministic networks becomes a new focus. Previously studied networks include ring~\cite{Young10}, path~\cite{Young10}, star~\cite{Young10}, complete graph~\cite{Young10}, torus graph~\cite{Bamieh12}, fractal tree-like graph~\cite{Patterson14,ZW20}, Farey graph~\cite{Yi15}, web graph~\cite{Ding15}, recursive trees~\cite{Sun14,Sun16}, Koch network~\cite{Yi17}, hierarchical graph~\cite{Qi18}, Sierpinski graph~\cite{Qi18}, weighted Koch network~\cite{Dai18}, 5-rose graph~\cite{WXD19}, 4-clique motif network~\cite{Yi18}, and pseudofractal scale-free web~\cite{Yi18}. Among all these graphs, the complete graph has the optimal structure that has the best performance for noisy consensus dynamics. However, the complete graph is a dense graph which means that the communication cost of the complete graph is very high. It has been shown that networks in real-world are often sparse, small-world and scale-free. Then, Yi, Zhang and Patterson~\cite{Yi18} asked two open questions: What is the minimum scaling of the first-order coherence for sparse networks? Is this minimal scaling achieved in real scale-free networks? We will give a positive answer to these two questions in this paper.

Another interesting question about network coherence is that how network structural characteristics affect network coherence~\cite{Qi18,Yi15,Yi18}. It has been shown that the scale-free behavior and the small-world topology can significantly improve the network coherence~\cite{Yi15,Yi18}. Clearly, the star~\cite{Young10} can not be small-world for its small clustering coefficient. But the first-order coherence of a large star will converge to a small constant. The pseudofractal scale-free web~\cite{sfw} and the Farey graph~\cite{farey} are two famous small-world networks with high clustering coefficient. But we can see that the scale of the first-order coherence on the Farey graph is much larger than that on the pseudofractal scale-free web~\cite{Qi18,Yi15}. So there are some other network structural characteristics which can affect the first-order coherence. Yi, Zhang and Patterson~\cite{Qi18} argued that it would be the scale-free behavior which is absent on the Farey graph. However, the Koch network~\cite{koch} is small-world and scale-free, and the first-order coherence on the Koch network scales with the order of the network. In addition, it is clear that the complete graph and the star are not scale-free, but the first-order coherence on these two graphs are very small. So there should be something else which can affect the network coherence. In this paper, by analyzing and comparing several studied networks, we will give our answer to the above question.

The outline of this work is as follows. In Section~\ref{pre}, we present some notations and definitions in graph theory and consensus problems.
In Section~\ref{network}, we construct the hierarchical small-world network. In Section~\ref{cal}, we compute the Laplacian eigenvalues and the network coherence, and give our answers to some open questions. In Section~\ref{con}, we make a conclusion.

\section{Preliminaries}\label{pre}
Let $G=(V,E)$ be a connected and undirected graph (network). $|S|$ denotes the cardinality of the set $S$. The order (number of vertices) and the size (number of edges) of $G$ are $n=|V|$ and $m=|E|$, respectively. If $e=\{u,v\}$ is an edge of $G$, we say vertices $u,v$ are \emph{adjacent} by $e$, and $u$ is a \emph{neighbor} of $v$. Let $N_G(v)$ denote the set of neighbors of $v$ in graph $G$. The \emph{degree} of vertex $v$ in graph $G$ is given by $d_g(v)=|N_G(v)|$. We denote the maximum and minimum vertex degrees of $G$ by $\Delta$ and $\delta$, respectively. Let $S$ be a subset of vertex set $V$. $G-S$ is the graph obtained from $G$ by deleting all vertices in $S$. If $G-S$ is disconnected, we call $S$ a \emph{vertex cut set} of $G$. The \emph{vertex connectivity} $c_v$ of graph $G$ is defined as the minimum order of all vertex cut sets. Similarly, we can define the \emph{edge connectivity} $c_e$.

The \emph{density} of graph $G$  is given by~\cite{d}
\begin{equation}\label{sparse}
d=\frac{2m}{n(n-1)},
\end{equation}
Clearly, $0\leq d\leq1$. $G$ is a sparse graph if and only if $d\ll1$.
\subsection{Four graph matrices}
The \emph{adjacency matrix} of graph $G$ is a symmetric matrix $A=A(G)=[a_{i,j}]$, whose $(i,j)$-entry is
\begin{equation}\nonumber
a_{i,j}=\left\{
  \begin{array}{ll}
    1, & \hbox{if $v_i$ is adjacent with  $v_j$;} \\
    0, & \hbox{otherwise.}
  \end{array}
\right.
\end{equation}
The \emph{degree matrix} of graph $G$ is a diagonal matrix $D=D(G)=[d_{i,j}]$ where
\begin{equation*}
d_{i,j}=\left\{
  \begin{array}{ll}
    d_G(v_i), & \hbox{if $i=j$;} \\
    0, & \hbox{otherwise.}
  \end{array}
\right.
\end{equation*}

The \emph{Laplacian matrix} of graph $G$ is defined by $L=D-A$. We write $(LX)_i$ as the element corresponding to the vertex $v_i$ in the vector $LX$.

\begin{teo}\cite{Godsil09,bapat}
Let $G$ be a connected and undirected graph with vertex set $V(G)=\{v_1,v_2,\cdots,v_n\}$. $X=(x_1,x_2,\cdots,x_n)^{\top}$ is a column vector. Then the following assertions hold.

(i) $LX=\lambda X $ if and only if, for each $i$,
\begin{eqnarray}
(LX)_i&=&d_G(v_i)x_i-\sum_{v_j\in N_G(v_i)}x_j  \label{lte} \\
      &=&\lambda x_i.  \nonumber
\end{eqnarray}

(ii) The rank of $L$ is $n-1$.

(iii) The row (column) sum of $L$ is zero.

\end{teo}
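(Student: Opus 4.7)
The plan is to dispatch (i) and (iii) by direct unpacking of the definition $L=D-A$, and then to use the quadratic form $X^{\top}LX$ together with connectedness of $G$ to nail down the kernel of $L$ in part (ii).

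For \emph{part (i)}, I would simply write out the $i$-th row of $L=D-A$: the diagonal entry is $d_G(v_i)$, the entry in column $j$ is $-1$ when $v_j\in N_G(v_i)$, and $0$ otherwise. Therefore $(LX)_i=d_G(v_i)x_i-\sum_{v_j\in N_G(v_i)}x_j$ for every $i$. The eigen-equation $LX=\lambda X$ is by definition the componentwise statement $(LX)_i=\lambda x_i$ for each $i$, which gives the claimed equivalence.

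For \emph{part (iii)}, summing the entries of row $i$ of $L$ gives $d_G(v_i)-|N_G(v_i)|=d_G(v_i)-d_G(v_i)=0$, since row $i$ has $d_G(v_i)$ entries equal to $-1$ (one per neighbor) and the diagonal entry equal to $d_G(v_i)$. The column statement follows because $L$ is symmetric ($D$ is diagonal and $A$ is symmetric). As an immediate consequence, the all-ones vector $\mathbf{1}=(1,\dots,1)^{\top}$ satisfies $L\mathbf{1}=0$, so $\mathrm{rank}(L)\le n-1$.

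For \emph{part (ii)}, which is the only nontrivial step, I would establish the reverse inequality $\mathrm{rank}(L)\ge n-1$ by showing $\dim\ker(L)\le 1$. The standard trick is the quadratic form identity
\begin{equation*}
X^{\top}LX=\sum_{\{v_i,v_j\}\in E}(x_i-x_j)^2,
\end{equation*}
obtained by grouping the $(i,j)$ and $(j,i)$ contributions of $A$ against the diagonal of $D$ using part (i). If $LX=0$ then $X^{\top}LX=0$, which forces $x_i=x_j$ on every edge. Because $G$ is connected, any two vertices are joined by a path, so $X$ must be constant, i.e.\ $X\in\mathrm{span}(\mathbf{1})$. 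Combined with $L\mathbf{1}=0$ from (iii), this yields $\ker(L)=\mathrm{span}(\mathbf{1})$, hence $\mathrm{rank}(L)=n-1$. The only real obstacle is remembering to invoke connectedness of $G$ at exactly this point; without it the kernel would have dimension equal to the number of connected components.
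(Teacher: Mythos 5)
Your proof is correct and is the standard textbook argument for this result; the paper itself gives no proof here, merely citing Godsil--Royle and Bapat, and those references argue exactly as you do (row-by-row expansion of $L=D-A$ for (i) and (iii), and the quadratic form $X^{\top}LX=\sum_{\{v_i,v_j\}\in E}(x_i-x_j)^2$ together with connectedness to pin down $\ker(L)=\mathrm{span}(\mathbf{1})$ for (ii)). You correctly identify connectedness as the essential hypothesis for part (ii); nothing is missing.
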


According to $(iii)$, we know that  $0$ is an eigenvalue of $L$ with corresponding eigenvector $\mathbf{1}=(1,1,\cdots,1)^{\top}$. Since the rank of $L$ is $n-1$, we can write the eigenvalues of $L$ as $0=\lambda_1<\lambda_2\leq \lambda_3\leq \cdots\leq \lambda_n$. The second smallest Laplacian eigenvalue $\lambda_2$ is called the \emph{algebraic connectivity} of the graph. This concept was introduced by Fiedler~\cite{Fiedler}. A classical result on the bounds for $\lambda_2$ is given by Fiedler~\cite{Fiedler} as follows:
\begin{equation}\label{Fied}
\lambda_2\leq c_v\leq c_e\leq \delta.
\end{equation}

The \emph{transition matrix} of graph $G$ is a $n$-order matrix $P=[p_{i,j}]$ in which $p_{ij}=\frac{a_{i,j}}{d_G(v_i)}$. So $P=D^{-1}A$. Since $P$ is conjugate to the symmetric matrix $D^{-\frac{1}{2}}AD^{-\frac{1}{2}}$, all eigenvalues of $P$ are real. We denote these eigenvalues as $1=\theta_1>\theta_2\geq\cdots\geq\theta_n$.

\subsection{Consensus problems}
In this subsection, we give a simple introduction from a graph theory perspective to \emph{consensus problems}~\cite{Qi18,Patterson14}. We refer the readers to Refs.~\cite{Saber04,Qi18,Bamieh12,Patterson14} for more details. The information exchange network of a multi-agent system can be modeled via graph $G$. Each vertex of graph $G$ represents an agent, and each edge of graph $G$ represents a communication channel. Two endpoints of an edge can exchange information with each other through the communication channel. Usually, the state of the system at time $t$ is given by a column vector $X(t)=(x_1(t),x_2(t),\cdots,x_n(t))^{\top}\in \mathbb{R}^{n}$, where $x_i(t)$ denotes the state (e.g., position, velocity, temperature, etc.) of the agent $v_i$ at time $t$. Each agent can update its state according to its current state and the information received from its neighbors. Generally, the dynamic of each agent $v_i$ can be described by $\dot{x}_i(t)=u_i(t)$ where $u_i(t)$ is the \emph{consensus protocol (or algorithm)}.

\subsubsection{Consensus without communication time-delay and noise}
Olfati-Saber and Murray~\cite{Saber04} proved that if
\begin{equation}\label{pro1}
u_i(t)=\sum_{v_j\in N_G(v_i)}(x_j(t)-x_i(t)),
\end{equation}
then, the state vector $X(t)$ evolves according to the following differential equation:
\begin{equation}\label{pro11}
\dot{X}(t)=-LX(t),
\end{equation}
 and asymptotically converges to the average of the initial states (\emph{i.e.}, for each $i$, $lim_{t\rightarrow \infty}x_i(t)=\frac{1}{n}\sum_{k=1}^nx_k(0)$, where $x_k(0)$ is the initial state of agent $v_k$). This means that the system with protocol~(\ref{pro1}) can reach an average-consensus. In addition, the convergence speed of $X(t)$ can be measured by the second smallest Laplacian eigenvalue $\lambda_2$: the larger the value of $\lambda_2$, the faster the convergence speed~\cite{Qi18}.


\subsubsection{Consensus with communication time-delay}
There are some finite time lags for agents to communicate with each other in many real-world networks. Olfati-Saber and Murray~\cite{Saber04} showed that if the time delay for all pairs of agents is independent on $t$ and fixed to a small constant $\epsilon$,
\begin{equation}\label{pro2}
u_i(t)=\sum_{v_j\in N_G(v_i)}(x_j(t-\epsilon)-x_i(t-\epsilon)),
\end{equation}
then, the state vector $X(t)$ evolves according to the following delay differential equation:
\begin{equation}\label{c4}
\dot{X}(t)=-LX(t-\epsilon).
\end{equation}
In addition, $X(t)$ asymptotically converge to the average of the initial states if and only if $\epsilon$ satisfies the following condition:
\begin{equation}\label{pro3}
0< \epsilon < \epsilon_{max}=\frac{\pi}{2\lambda_n}.
\end{equation}
Eq.~(\ref{pro3}) shows that the largest Laplacian eigenvalue $\lambda_n$ is a good measure for delay robustness: the smaller the value of $\lambda_n$, the bigger the maximum delay $\epsilon_{max}$~\cite{Qi18}. Moreover, similarly to the system with protocol~(\ref{pro1}), the convergence speed of the system with protocol~(\ref{pro2}) is also determined by the second smallest Laplacian eigenvalue $\lambda_2$~\cite{Saber04,Qi18}.

\subsubsection{Consensus with white noise}

In order to capture the robustness of consensus algorithms when the agents are subject to external perturbations, Patterson and Bamieh~\cite{Patterson14} introduced a new quantity called \emph{network coherence}.

\emph{First-order network coherence:} In the first-order consensus problem, each agent has a single state $x_i(t)$. The dynamics of this system are given by~\cite{Qi18,Bamieh12,Patterson14}
\begin{equation}\label{c6}
\dot{X}(t)=-LX(t)+w(t)
\end{equation}
where $w(t) \in \mathbb{R}^n$ is the white noise.

It is interesting that if the noise $w(t)$ satisfies some particular conditions, the state of each agent $x_i(t)$ does not necessarily converge to the average-consensus, but fluctuates around the average of the current states~\cite{Qi18,Patterson14}. The variance of these fluctuations can be captured by network coherence.

\begin{defi}(Definition 2.1 of \cite{Patterson14})\label{def1}
For a connected graph $G$, the first-order network coherence $H_1$ is defined as the mean (over all vertices), steady-state variance of the deviation from the average of the current agents states,
\[
H_{1}=H_1(G)=\lim_{t\to\infty}\frac{1}{n}\sum^n_{i=1}{\bf{{var}}}
\left\{x_i(t)-\frac{1}{n}\sum^n_{j=1}x_j(t)\right\},
\]
where $\bf{var\{\cdot\}}$ denotes the variance.
\end{defi}

It is amazing for algebraic graph theorists that $H_1$ is completely determined by the $n-1$ nonzero Laplacian eigenvalues~\cite{Bamieh12}. Specifically, the first-order network coherence equals
\begin{equation}\label{h1}
H_{1}=\frac{1}{2n}\sum^{n}_{i=2}\frac{1}{\lambda_i}.
\end{equation}
Lower $H_{1}$ implies better robustness of the system irrespective of the presence of noise, i.e., vertices remain closer to consensus at the average of their current states~\cite{Qi18,Patterson14}.

\emph{Second-order noisy consensus:} In the second-order consensus problem, each agent $v_i$ has two state variables $x_i(t)$ and $y_i(t)$. Agent $v_i$ updates its state based on the states of itself and its neighbors. The state of the entire system at time $t$ is $X(t),Y(t)$, and random external disturbances enter through the $Y(t)$ terms. The dynamics of this system  are~\cite{Qi18,Patterson14}

\begin{equation}\label{c8}
\left[ \begin{array}{c}
\dot{X}(t)\\
\dot{Y}(t)
\end{array} \right]=\left[ \begin{array}{cc}
0     & I\\
-L & -L
\end{array} \right]\left[ \begin{array}{c}
X(t)\\
Y(t)
\end{array} \right]+\left[ \begin{array}{c}
0\\
I
\end{array} \right]w(t),
\end{equation}
where $w(t) \in \mathbb{R}^{n}$ is a disturbance vector with zero-mean, unit variance.

The network coherence of the second-order system~(\ref{c8}) is defined in terms of $X(t)$ only.
\begin{defi}(Definition 2.2 of \cite{Patterson14})
For a connected graph $G$, the second-order network coherence $H_{2}$ is the mean (over all vertices), steady-state variance of the deviation from the average of $X(t)$
\[
H_{2}=H_2(G)=\lim_{t\to\infty}\frac{1}{n}\sum^n_{i=1}{\bf{{var}}}
\left\{x_i(t)-\frac{1}{n}\sum^n_{j=1}x_j(t)\right\}.
\]
\end{defi}

The value of $H_2$ can also be completely determined by the nonzero Laplacian eigenvalues~\cite{Bamieh12}, specifically
\begin{equation}\label{h2}
H_{2}=\frac{1}{2n}\sum^{n}_{i=2}\frac{1}{\lambda_i^2}.
\end{equation}
A small $H_{2}$ implies that better robust to external disturbances.

\section{Network construction and properties}\label{network}
The hierarchical small-world network was introduced by Chen \emph{et al.}~\cite{ChenPhysicaA} and can be created following a recursive-modular method, see Fig.~\ref{net}. Let $M^r_g$ $(g\geq 0, r\geq 2)$ denote the network after $g$ generations of evolution. For $g=0$, the network $M_0$ is a single vertex. For $g\geq 1$, $M^r_g$ can be obtained from $r$ copies of $M^r_{g-1}$ and a new vertex by linking the new vertex to every vertex in each copy of $M^r_{g-1}$.
\begin{figure}[ht]
\begin{center}
\includegraphics[width=0.4\linewidth]{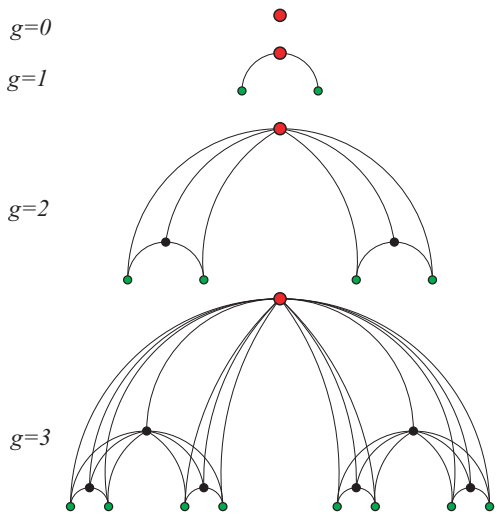}
\end{center}
\caption{Growing processes of $M_g^r$ with $r=2$. $M_g^r$ consists of $r$ copies of $M_{g-1}^r$ and a new vertex.}
\label{net}
\end{figure}

 A \emph{rooted tree} $T$ is a tree with a particular vertex $v_0$, see Fig~\ref{roottree}. We call $v_0$ the root of $T$. Let $v$ be a vertex of $T$. If $v$ has only one neighbor, $v$ is a \emph{leaf} of $T$; if $v$ has at least two neighbors, $v$ is a \emph{non-leaf vetex} of $T$. The \emph{level} of vertex $v$ in $T$ is the length of the unique path from $v_0$ to $v$. Note that the level of the root $v_0$ is $0$. The \emph{height} of rooted tree $T$ is the largest level number of all vertices. We always use a directed tree to describe a rooted tree by replacing each edge with an arc (directed edge) directing from a vertex of level $i$ to a vertex of level $i+1$. Fig.~\ref{roottree} shows a root tree of height $3$.
\begin{figure}[ht]
\begin{center}
\includegraphics[width=0.4\linewidth]{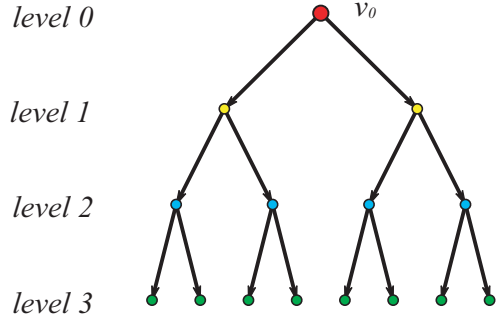}
\end{center}
\caption{A rooted tree $T$ of height $3$.}
\label{roottree}
\end{figure}
If $(u,v)$ is an arc of the rooted tree $T$, then $u$ is the \emph{parent} of $v$, and $v$ is a \emph{child} of $u$. If there is a unique directed path from a vertex $v$ to a vertex $w$, we say that $v$ is an \emph{ancestor} of $w$, and $w$ is a \emph{descendant} of $v$. For $r\geq 2$, a rooted tree is called a \emph{full $r$-ary tree} if  all leaves are in the same level and every non-leaf vertex has exactly $r$ children. The rooted tree illustrated in Fig.~\ref{roottree} is a full $2$-ary (binary) tree.
The $r$-ary tree is one of the most important data structures in computer science~\cite{Brent,bar91,Srinivas}, and it has various applications in biology~\cite{San98,Otu}, and graph theory~\cite{Fu00}. It is not difficult to see that the hierarchical network $M^r_g$ can also be obtained from a rooted tree $T_g$ of height $g$ by linking every non-leaf vertex to all its descendants, and we call the rooted tree $T_g$ the basic tree of $M^r_g$.

Let $N_g$ and $E_g$ denote the order and size of the hierarchical small-world network $M^r_g$, respectively. According to the two construction algorithms, we have
\begin{equation}\label{a1}
N_g=\frac{1}{r-1}(r^{g+1}-1)
\end{equation}
and
\begin{eqnarray*}
E_g&=&\frac{1}{(r-1)^2}\left(gr^{g+2}-gr^{g+1}-r^{g+1}+r\right)\\
   &=&\frac{g+1}{r-1}+(g-\frac{1}{r-1})N_g.
\end{eqnarray*}

According to Eq.~(\ref{sparse}), the density of $M_g^r$ is given by
\begin{eqnarray*}
d&=&\frac{2E_g}{N_g(N_g-1)}\\
&=&\frac{2(g+1)}{(r-1)N_g(N_g-1)}+\frac{2(g-\frac{1}{r-1})}{N_g-1}.
\end{eqnarray*}
If $N_g\gg1$, then $\frac{g+1}{(r-1)N_g(N_g-1)}\rightarrow 0$ and also $\frac{g-\frac{1}{r-1}}{N_g-1}\ll 1$, that is $d\ll 1$. Hence, the hierarchical small-world network $M_g^r$ is a sparse network.

In an arbitrary level $i$ of the basic tree $T_g$, there are $r^i$ vertices. We randomly choose a vertex $v$. The probability that it comes from level $i$ is
\begin{equation}\label{a4}
P(i)=\frac{r^{i}(r-1)}{r^{g+1}-1}.
\end{equation}
Since all vertices in level $i$ have the same degree $k_i=\frac{r^{g+1-i}-1}{r-1}+i-1$, and vertices in different levels have different degrees, the degree distribution $P(k)$ of the hierarchical small-world network is
\begin{equation*}
P(k)=\frac{r-1}{(r-1)(1+k-i)+1-r^{-i}}
\end{equation*}

 The degree distribution of a real-world network always follows a power-law distribution $P(k)\sim k^{-\gamma}$ with $\gamma> 1$~\cite{Bara,Albert,Dorogovtsevavp10}. For the hierarchical small-world network, according to the result in~\cite{ChenPhysicaA}, we know that $\gamma$ would approach $1$ when $M_g^r$ is large enough. That is abnormal. However, this network model exists in real life since it is a good model for the benefit transmission web of the pyramid selling~\cite{sell2} in China and many other countries.

\section{Calculations of network coherence}\label{cal}
\subsection{Eigenvalues and their corresponding eigenvectors}
Let $T_g$ be the basic tree of $M^r_g$ and $V(T_g)=V(M^r_g)=\{v_0,v_1,\cdots,v_{N_g-1}\}$. The root of $T_g$ is $v_0$. For each vertex $v_i\in T_g$, we denote the set of descendants (ancestors) of $v_i$ by $des(v_i)$ $(anc(v_i))$. Let $D_i=|des(v_i)|$ and $A_i=|anc(v_i)|$. Let $d_g(v_i)$ be the degree of $v_i$ in $M_g^r$. It is important to note that $d_g(v_i)=D_i+A_i$ for each $i$. Let $l_g(v_i)$ denote the level of $v_i$ in $T_g$. It is not difficult to see that $l_g(v_i)=A_i$.

In order to help the readers to get a direct impression of the following theorem and a better understanding of the proof, we introduce an example.
\begin{figure}[ht]
\begin{center}
\includegraphics[width=0.5\linewidth]{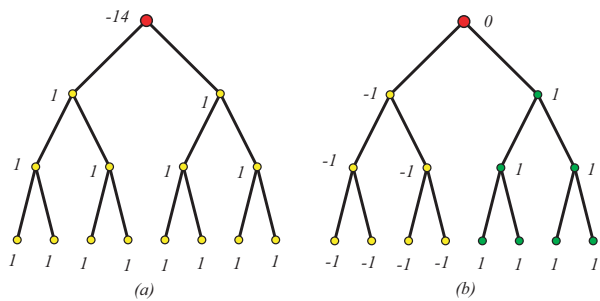}
\end{center}
\caption{(a) The eigenvector corresponding to eigenvalue $\alpha=15$; (b) The eigenvector corresponding to eigenvalue $\beta=1$.}
\label{vector2}
\end{figure}
\begin{es}
As shown in Fig.~\ref{roottree}, root $v_0$ is a non-leaf vertex of the basic tree $T_3$, and $d_3(v_0)=14,l_3(v_0)=0$. $v_1$ is the left child of root $v_0$. Let $\alpha=d_3(v_0)+1=15$, $\beta=l_3(v_0)+1=1$. Then $\alpha$ is a Laplacian eigenvalue of $M_3^2$ with corresponding eigenvector $X$ shown in Fig.~\ref{vector2}(a), because Eq.~(\ref{lte}) holds for every vertex of $M_3^2$. For instance, equations $(LX)_0=14\cdot (-14)-1\cdot 14=15\cdot(-14)=\alpha x_0$ and $(LX)_1=7\cdot 1+14-6\cdot1=15\cdot1=\alpha\cdot x_1$ show that Eq.~(\ref{lte}) holds for root $v_0$ and vertex $v_1$, respectively. Similarly, $\beta$ is also a Laplacian eigenvalue of $M_3^2$ with corresponding eigenvector $X'$ shown in Fig.~\ref{vector2}(b). For instance, equations $(LX')_0=0+7\cdot 1-7\cdot 1=0=\beta x'_0$ and $(LX')_1=7\cdot (-1)+6\cdot1=-1=\beta x'_1$ verify that Eq.~(\ref{lte}) holds for root $v_0$ and vertex $v_1$, respectively.

\end{es}

\begin{teo}\label{mainteo}
The nonzero Laplacian eigenvalues of the hierarchical small-world network $M^r_g$ are the following:

1. $d_g(v_i)+1$, repeated exactly once for each non-leaf vertex $v_i$;

2. $l_g(v_i)+1$, repeated $r-1$ times for each non-leaf vertex $v_i$.
\end{teo}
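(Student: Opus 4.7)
The plan is to exhibit, for each non-leaf vertex $v_i$ of the basic tree $T_g$, explicit eigenvectors realizing the two claimed eigenvalues, verify them via the componentwise eigenvalue equation, and then finish with a counting and linear-independence argument. The structural fact used throughout is that the neighbour set of any vertex $w$ in $M_g^r$ is precisely $\mathrm{anc}(w) \cup \mathrm{des}(w)$, so $d_g(w) = A_w + D_w$.

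For the family 1 eigenvalue $\alpha = d_g(v_i)+1$, I would define $X$ by $x_{v_i} = -D_i$, $x_w = 1$ for $w \in \mathrm{des}(v_i)$, and $x_w = 0$ elsewhere. The checks at $v_i$, at descendants of $v_i$, at ancestors of $v_i$, and at unrelated vertices all reduce to short algebraic manipulations once one splits $\mathrm{anc}(w)$ into those strictly above $v_i$, the vertex $v_i$ itself, and those strictly between $v_i$ and $w$; the key cancellation is that at any ancestor of $v_i$ the contribution $-D_i$ from $v_i$ is exactly offset by the $D_i$ unit values carried by $\mathrm{des}(v_i)$. For the family 2 eigenvalue $\beta = l_g(v_i)+1$, I would fix the children $c_1,\ldots,c_r$ of $v_i$ in $T_g$ and any $(b_1,\ldots,b_r)$ with $\sum_k b_k = 0$, and let $Y$ equal $b_k$ on each block $B_k = \{c_k\}\cup\mathrm{des}(c_k)$ and $0$ elsewhere. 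The local eigenvalue equation at a vertex $w\in B_k$ again telescopes once one splits $\mathrm{anc}(w)$ into the $l_g(v_i)+1$ vertices at or above $v_i$ (value $0$) and the $l_g(w)-l_g(v_i)-1$ vertices strictly between $v_i$ and $w$ (value $b_k$); at $v_i$ the full $r$-ary symmetry of $T_g$ makes the $r$ blocks equinumerous, so the neighbour sum reduces to a common factor times $\sum_k b_k = 0$, which handles $v_i$ and its ancestors at once. Varying $(b_k)$ over the $(r-1)$-dimensional hyperplane $\sum b_k = 0$ produces $r-1$ independent eigenvectors at each non-leaf $v_i$.

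The hard part will be completeness. Counting: the $(r^g-1)/(r-1)$ non-leaf vertices produce $(r^g-1)/(r-1)$ family 1 eigenvectors and $r^g-1$ family 2 eigenvectors, which together with the all-ones vector for eigenvalue $0$ total $N_g$. A direct computation shows $d_g(v_i)+1$ is strictly decreasing in $l_g(v_i)$ with minimum value $g+r$, while $l_g(v_i)+1$ ranges over $\{1,\ldots,g\}$; hence the family 1 and family 2 eigenvalues occupy disjoint intervals, and within each family distinct levels give distinct eigenvalues. It therefore suffices to check independence among eigenvectors attached to a common level, where the subtrees $\{v_i\}\cup\mathrm{des}(v_i)$ (for family 1) and the descendant sets $\mathrm{des}(v_i)$ (for family 2) are pairwise disjoint between different $v_i$, while the $r-1$ family 2 vectors at a single $v_i$ are independent by construction. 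This produces $N_g$ linearly independent eigenvectors in $\mathbb{R}^{N_g}$, so the claimed multiset is the complete Laplacian spectrum of $M_g^r$.
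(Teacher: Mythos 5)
Your proposal is correct and follows essentially the same route as the paper: the same eigenvectors supported on $\{v_i\}\cup\mathrm{des}(v_i)$ for the eigenvalue $d_g(v_i)+1$, and the same block-constant, zero-sum vectors on the children's subtrees for $l_g(v_i)+1$ (your $(b_1,\dots,b_r)$ parametrization spans exactly the paper's $X_g^2,\dots,X_g^r$). Your completeness step is in fact more explicit than the paper's, which stops at the count $N_g-1$; your observation that the two families lie in disjoint value ranges ($\ge g+r$ versus $\le g$) and that eigenvectors sharing an eigenvalue have pairwise disjoint supports is a welcome tightening rather than a different method.
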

\begin{proof}
Let $L_g$ be the Laplacian matrix of $M_g^r$. As mentioned above, $0$ is a special eigenvalue of $L_g$ with corresponding eigenvector $\mathbf{1}_{N_g}=(1,1,\cdots,1)^{\top}$. Since $M^r_g$ is connected, $L_g$ has $N_g-1$ non-zero eigenvalues. It is clear that, in the full $r$-ary tree $T_g$, each vertex of level $g$ is a leaf, and each vertex of level $i$ $(i\leq g-1)$ is a non-leaf vertex. Thus $T_g$ has $\frac{r^g-1}{r-1}$ non-leaf vertices. So the total number of eigenvalues mentioned in the statement of this theorem add up to $\frac{r^g-1}{r-1}\cdot r=\frac{r^{g+1}-r}{r-1}$ which equals $N_g-1$.

{\bf Case 1:} When $\lambda_g=d_g(v_i)+1$ where $v_i$ is a non-leaf vertex in $T_g$.

Let $X_g=\left(x_0,x_1,\cdots,x_{N_g-1}\right)^{\top}$ be a column vector, and
\begin{equation*}
x_k=\left\{
  \begin{array}{ll}
    -D_i, & \hbox{if $k=i$;} \\
       1, & \hbox{if $v_k$ is a descendant of $v_i$;} \\
       0, & \hbox{otherwise.}
  \end{array}
\right.
\end{equation*}
Since the number of descendants of $v_i$ is just $D_i$, we have $\sum_{j=1}^{N_g}x_j=1\cdot D_i-D_i-0=0$. Then, $X_g$ is orthogonal to the vector $\mathbf{1}_{N_g}$. We now have to prove that $X_g$ is indeed an eigenvector corresponding to the given eigenvalue $\lambda_g=d_g(v_i)+1$. In the proof, our main tool is Eq.~(\ref{lte}).

For the vertex $v_i$, $x_i=-D_i$. $x_j=1$ if $v_j$ is a descendant of $v_i$; $x_j=0$ if $v_j$ is an ancestor of $v_i$. Thus, we have
\begin{eqnarray*}
(L_gX_g)_i&=&d_g(v_i)(-D_i)-D_i-0\\
          &=&(d_g(v_i)+1)\cdot(-D_i)\\
           &=&(\lambda_gX_g)_i.
\end{eqnarray*}

For the vertex $v_k$ which is an ancestor of $v_i$, $x_k=0$. $x_j=-D_i$ if $v_j$ is $v_i$; $x_j=1$ if $v_j$ is a descendant of $v_i$; $x_j=0$ if $v_j$ is one of other neighbors of $v_k$. Hence, we have
\begin{eqnarray*}
(L_gX_g)_k&=&0+D_i-1\cdot D_i-0\\
          &=&0\cdot(d_g(v_i)+1)\\
          &=&(\lambda_gX_g)_k.
\end{eqnarray*}

For the vertex $v_k$ which is a descendant of $v_i$, $x_k=1$. $x_j=-D_i$ if $v_j$ is $v_i$; $x_j=1$ if $v_j$ is a descendant of $v_k$; $x_j=1$ if $v_j$ is a ancestor of $v_k$ and also a descendant of $v_i$; $x_j=0$ if $v_j$ is one of other neighbors of $v_k$. It is important to note that the number of ancestors of $v_k$, which are also descendants of $v_i$, is $A_k-A_i-1$, i.e., $|anc(v_k)\cap des(v_i)|=A_k-A_i-1$. We minus $1$ here because vertex $v_i$ is not a descendant of itself, it should be removed. Also, $d_g(v_k)=D_k+A_k$ and $d_g(v_i)=D_i+A_i$. Then, we have
\begin{eqnarray*}
(L_gX_g)_k&=&d_g(v_k)+D_i-1\cdot D_k-1\cdot(A_k-A_i-1)+0\\
           &=&D_i+A_i+1\\
          &=&1\cdot(d_g(v_i)+1)\\
          &=&(\lambda_gX_g)_k.
\end{eqnarray*}

It is clear that the equation $(L_gX_g)_k=(\lambda_gX_g)_k$ holds for all other vertices. Therefore, we have proved $L_gX_g=\lambda_gX_g$.

{\bf Case 2:} When $\lambda_g=l_g(v_i)+1$ where $v_i$ is a non-leaf vertex in $T_g$.

We denote the $r$ children of $v_i$ by $c_{1},c_{2},\cdots,c_{r}$. For each $t \in \{1,2,\cdots,r\}$, let $F_{t}=c_t\cup des(c_t)$. Let $|F_{1}|=f$. Since $T_g$ is a full $r$-ary tree, we have $|F_{t}|=f$ for every $t\in \{1,2,\cdots,r\}$. For each $s\in \{2,3,\cdots,r\}$, let $X_g^s=\left(x_0^s,,x_1^s,\cdots,x_{N_g-1}^s\right)^{\top}$ be a column vector, and
\begin{equation*}
x_k^s=\left\{
  \begin{array}{ll}
    -1, & \hbox{if $v_k \in F_{1}$;} \\
     1, & \hbox{if $v_k \in F_{s}$;} \\
     0, & \hbox{otherwise.}
  \end{array}
\right.
\end{equation*}
  Since $|F_s|=|F_1|=f$, $\sum_{j=1}^{N_g}x_j^s=f\cdot1+f\cdot(-1)-0=0$. Hence, $X_g^s$ is orthogonal to the vector $\mathbf{1}_{N_g}$. We now have to prove that $X_g^s$ is indeed an eigenvector corresponding to the given eigenvalue $\lambda_g=l_g(v_i)+1$.

For the vertex $v_k$ which is an ancestor of $c_{1}$, $x_k^s=0$. $x_j^s=-1$ if $v_j\in F_1$; $x_j^s=1$ if $v_j\in F_s$; $x_j^s=0$ if $v_j$ is one of other neighbors of $v_k$.  According to Eq.~(\ref{lte}), we have
\begin{eqnarray*}
(L_gX_g^s)_k&=&0+f-f-0\\
          &=&0\cdot(l_g(v_i)+1)\\
           &=&(\lambda_gX_g^s)_k.
\end{eqnarray*}

For the vertex $v_k \in F_{1}$, $x_k^s=-1$. $x_j^s=-1$ if is $v_j$ a descendant of $v_k$; $x_j^s=-1$ if $v_j$ is an ancestor of $v_k$ and also a descendant of $v_i$; $x_j^s=0$ if $v_j$ is one of other neighbors of $v_k$. Note that, $d_g(v_k)=D_k+A_k$ and $l_g(v_i)=A_i$. According to Eq.~(\ref{lte}), we have
\begin{eqnarray*}
(L_gX_g^s)_k&=&d_g(v_k)(-1)+D_k+(A_k-A_i-1)+0\\
          &=&-(A_i+1)\\
          &=&(l_g(v_i)+1)\cdot(-1)\\
          &=&(\lambda_gX_g^s)_k.
\end{eqnarray*}

For the vertex $v_k \in F_{s}$, $x_k^s=1$. $x_j^s=1$ if $v_j$ is a descendant of $v_k$; $x_j^s=1$ if $v_j$ is an ancestor of $v_k$ and also a descendant of $v_i$; $x_j^s=0$ if $v_j$ is one of other neighbors of $v_k$. According to Eq.~(\ref{lte}), we have
\begin{eqnarray*}
(L_gX_g^s)_k&=&d_g(v_k)\cdot1-D_k-(A_k-A_i-1)-0\\
          &=&(A_i+1)\\
          &=&(l_g(v_i)+1)\cdot 1\\
          &=&(\lambda_gX_g^s)_k.
\end{eqnarray*}

It is clear that the equation $(L_gX_g^s)_k=(\lambda_gX_g^s)_k$ holds for all other vertices. Thus, we have proved that $L_gX_g^s=\lambda_gX_g^s$. So $X_g^s$ is an eigenvector corresponding to the given eigenvalue $\lambda_g=l_g(v_i)+1$. Then eigenvalue $l_g(v_i)+1$ has $r-1$ linear independent vectors $X_g^2$, $X_g^3$,$\cdots$, $X_g^r$. Therefore, the multiplicity of the eigenvalue $l_g(v_i)+1$ is $r-1$.
\end{proof}

For each $i\in \{0,1,\cdots,g-1\}$, there are $r^i$ vertices in level $i$. If $v$ is a vertex in level $i$, then $d_g(v)=\frac{r^{g-i+1}-1}{r-1}+i-1$. Hence, we have the following corollary. Here we write multiplicities as subscript for convenience and there is no confusion.

\begin{cor}\label{maincor}
The nonzero Laplacian eigenvalues of the hierarchical small-world network $M^r_g$ are $(i+1)_{(r-1)r^i}$, $(\frac{r^{g-i+1}-1}{r-1}+i)_{r^i}$, where $i\in \{0,1,\cdots,g-1\}$.
\end{cor}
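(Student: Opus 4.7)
The plan is to derive Corollary \ref{maincor} as a direct reindexing of Theorem \ref{mainteo}, sorting the non-leaf vertices of the basic tree $T_g$ by their level. First I would record the two structural facts used: in a full $r$-ary tree of height $g$, a vertex is non-leaf precisely when its level lies in $\{0,1,\ldots,g-1\}$, and there are exactly $r^i$ vertices at each such level $i$. This, together with the level-dependent degree formula $d_g(v)=\frac{r^{g-i+1}-1}{r-1}+i-1$ recalled just before the corollary, gives everything needed.

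Next I would substitute into the two families delivered by Theorem \ref{mainteo}. For a level-$i$ non-leaf vertex $v$, part (1) contributes the eigenvalue $d_g(v)+1=\frac{r^{g-i+1}-1}{r-1}+i$ with multiplicity $1$; accumulating over the $r^i$ vertices at level $i$, this eigenvalue acquires multiplicity $r^i$ from that level. Part (2) contributes $l_g(v)+1=i+1$ with multiplicity $r-1$ at each of those $r^i$ vertices, so the contribution from level $i$ is the eigenvalue $i+1$ with multiplicity $(r-1)r^i$. This is exactly the list in the corollary.

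As a consistency check I would verify the total count: summing $r^i+(r-1)r^i=r^{i+1}$ over $i\in\{0,1,\ldots,g-1\}$ yields $\frac{r^{g+1}-r}{r-1}=N_g-1$, which agrees with the count already carried out in the proof of Theorem \ref{mainteo} and accounts for all nonzero Laplacian eigenvalues.

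There is essentially no obstacle here, since the corollary is pure bookkeeping on top of the theorem. The only subtlety worth a sentence is that the values $i+1$ produced at level $i$ and the values $\frac{r^{g-i+1}-1}{r-1}+i$ produced at possibly different levels may in principle coincide numerically for small $g,r$; but the statement lists contributions indexed by level with multiplicities attached separately, so no merging of duplicate values is required and the reindexing is unambiguous.
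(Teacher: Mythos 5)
Your proposal is correct and follows exactly the route the paper takes: the corollary is obtained by grouping the non-leaf vertices of $T_g$ by level, using that level $i\in\{0,1,\ldots,g-1\}$ contains $r^i$ non-leaf vertices each of degree $\frac{r^{g-i+1}-1}{r-1}+i-1$, and substituting into the two eigenvalue families of Theorem~\ref{mainteo}. Your added multiplicity count $\sum_{i=0}^{g-1}r^{i+1}=N_g-1$ matches the check already performed in the proof of the theorem.
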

\begin{es}
Table~\ref{exam} lists all the $14$ nonzero Laplacian eigenvalues of the network $M_3^2$ showed in Fig.~\ref{net}. The same result can be obtained by computing the Laplacian eigenvalues of $M_3^2$ directly.

\begin{table}[!t]
\centering
\caption{All the nonzero Laplacian eigenvalues of $M_3^2$}
\label{exam}
\begin{tabular}{c|c|c|c|c|c|c}
\toprule[1pt]
Eigenvalue                                & 1                       & 2&3&15&8&5  \\
\midrule
Multiplicity                          & 1                         &2&4&1&2&4 \\
\bottomrule[1pt]
\end{tabular}
\end{table}
\end{es}

\subsection{Convergence speed and delay robustness}
As shown in Corollary~\ref{maincor}, the second smallest Laplacian eigenvalue of the hierarchical small-world network $M_g^r$ is $\lambda_2=1$ for all $g\geq 0$ and $r\geq 2$. Thus, all these networks have the same convergence speed of the consensus protocol~(\ref{pro2}). The largest Laplacian eigenvalue of $M_g^r$ is $\lambda_{N_g}=\frac{r^{g+1}-1}{r-1}$. Fig.~\ref{maxl} shows that $\lambda_{N_g}$ is an increasing function with respect to $r$ and $g$. Therefore, the consensus protocol~(\ref{pro2}) on $M_g^r$ is more robust to delay with smaller $r$ and $g$.
\begin{figure}[ht]
\begin{center}
\includegraphics[width=0.8\linewidth]{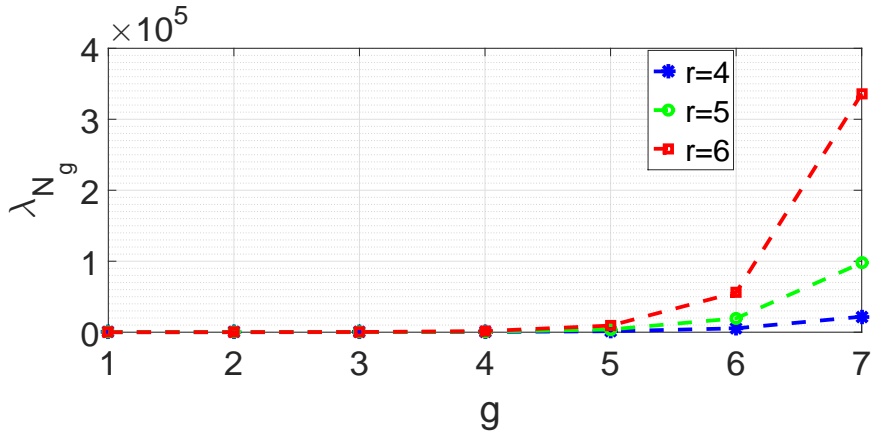}
\end{center}
\caption{The largest Laplacian eigenvalue $\lambda_{N_g}$ for hierarchical networks $M_g^r$ with various $g$ and $r$.}
\label{maxl}
\end{figure}

\begin{table*}[!t]
\centering
\caption{The Laplacian eigenvalues for some typical network structures}
\label{t1}
\begin{tabular}{c|c|c}
\toprule[1pt]
Network structure                                & $\lambda_2$                         & $\lambda_N$  \\
\midrule
Hierarchical graph $H(3,3)$~\cite{Qi18}                    & $\frac{9}{N}$                       &$\frac{2}{ln3}lnN$      \\
\midrule
Sierpinski graph $S(3,3)$~\cite{Qi18}                     & $\frac{15}{N^{log_3^5}}$            &$5$\\
\midrule
Path $P_N$~\cite{Young10}                                        &  $2-2cos(\frac{1}{N}\pi)$             &$2-2cos(\frac{N-1}{N}\pi)$      \\
\midrule
Cycle $C_N$~\cite{Young10}                                         &  $2-2cos(\frac{2}{N}\pi)$             &$2-2cos(\frac{N-1}{N}\pi)$\\
\midrule
Complete graph $K_N$~\cite{Young10}                               & $ N $                                  &$ N $                        \\
\midrule
Hierarchical SW network $M_g^2$           & 1 & $N$\\
\midrule
Star $X_N$~\cite{Young10}                                     & 1     &$N$ \\
\bottomrule[1pt]
\end{tabular}
\end{table*}

The convergence speed and delay robustness in different networks have been widely studied~\cite{Saber04,Young10,Qi18}. From the second column of Table~\ref{t1}, we find that the convergence speed of the consensus protocol~(\ref{pro2}) on the hierarchical small-world network $M_g^2$ is faster than that on other sparse graphs. At the same time, the third column shows that the consensus protocol~(\ref{pro2}) on other sparse graphs are much more robust to delay than that on the hierarchical small-world network $M_g^2$. As proved by Olfati-Saber and Murray~\cite{Saber04}, there is a tradeoff between convergence speed and delay robustness. They also claimed another tradeoff between high convergence speed and low communication cost. Here we can see this tradeoff by comparing the hierarchical small-world network $M_g^2$ with the complete graph $K_N$. The complete graph has much more edges than the hierarchical small-world network $M_g^2$ does. So the convergence speed of the consensus protocol~(\ref{pro2}) on the complete graph $K_N$ is faster than that on the hierarchical small-world network $M_g^2$. From Table~\ref{t1} we know that the complete graph $K_N$ and the hierarchical small-world network $M_g^2$ have the same delay robustness. It is clear that these two networks have the same maximum degree $N-1$. We can see that all vertices in the complete graph $K_N$ have the maximum degree, while there is only one vertex in $M_g^2$ with the maximum degree. So the delay robustness of the consensus protocol~(\ref{pro2}) is independent with the number of vertices with the maximum degree.
\subsection{Network coherence}

In~\cite{Yi18}, Yi, Zhang and Pattersom proposed two open questions: What is the minimum scaling of $H_1$ for sparse networks? Is this minimal scaling achieved in real scale-free networks? Now we want to answer these two questions.

According to Corollary~\ref{maincor} and Eqs.~(\ref{h1}) and (\ref{h2}), the following two theorems can be easily observed.
\begin{teo}\label{first}
For the hierarchical small-world network $M_g^r$, the first-order coherence is
\begin{eqnarray*}
H_1&=&\frac{(r-1)^2}{2(r^{g+1}-1)}\sum_{i=0}^{g-1}r^i(\frac{1}{r^{g-i+1}+i(r-1)-1}\\
&+&\frac{1}{i+1}).
\end{eqnarray*}
\end{teo}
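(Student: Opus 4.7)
The plan is a direct substitution argument: the formula for $H_1$ in Eq.~(\ref{h1}) depends only on the nonzero Laplacian eigenvalues (counted with multiplicity), and Corollary~\ref{maincor} already hands us a complete list of those eigenvalues together with their multiplicities. So the proof reduces to plugging in and tidying up.

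Concretely, I would proceed as follows. First, I would rewrite the sum $\sum_{i=2}^{N_g} 1/\lambda_i$ in Eq.~(\ref{h1}) by organising its terms according to the two families of eigenvalues given in Corollary~\ref{maincor}: for each $i\in\{0,1,\dots,g-1\}$, the eigenvalue $i+1$ contributes its reciprocal with multiplicity $(r-1)r^i$, and the eigenvalue $(r^{g-i+1}-1)/(r-1)+i$ contributes its reciprocal with multiplicity $r^i$. This gives
\begin{equation*}
\sum_{i=2}^{N_g}\frac{1}{\lambda_i} \;=\; \sum_{i=0}^{g-1}\left[\frac{(r-1)r^i}{i+1} + \frac{r^i}{\frac{r^{g-i+1}-1}{r-1}+i}\right].
\end{equation*}
Second, I would clear the nested fraction in the second summand by multiplying numerator and denominator by $r-1$, which turns it into $(r-1)r^i/\bigl(r^{g-i+1}+i(r-1)-1\bigr)$. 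At that point the common factor $(r-1)$ can be pulled out of the whole sum.

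Finally, I would divide by $2N_g$ and substitute $N_g=(r^{g+1}-1)/(r-1)$ from Eq.~(\ref{a1}), so that the prefactor $(r-1)/(2N_g)$ collapses to $(r-1)^2/\bigl(2(r^{g+1}-1)\bigr)$. Collecting everything produces exactly the claimed closed form. There is no genuine obstacle here—the nontrivial work is already done in Corollary~\ref{maincor}; the only thing to be careful about is the bookkeeping of the two multiplicities $(r-1)r^i$ and $r^i$ and the algebraic simplification that merges the $(r-1)$ from the multiplicity with the $(r-1)$ created when clearing the fraction, so that both summands end up with a common factor $(r-1)r^i$ that pairs cleanly with the prefactor.
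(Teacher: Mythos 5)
Your proposal is correct and follows exactly the route the paper intends: the paper's ``proof'' of Theorem~\ref{first} is simply the remark that it follows from Corollary~\ref{maincor} together with Eq.~(\ref{h1}), i.e.\ the same substitution of the two eigenvalue families with multiplicities $(r-1)r^i$ and $r^i$, followed by clearing the nested fraction and using $N_g=(r^{g+1}-1)/(r-1)$. Your algebraic bookkeeping checks out and reproduces the stated closed form.
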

\begin{teo}\label{second}
For the hierarchical small-world network $M_g^r$, the second-order coherence is
\begin{eqnarray*}
H_2&=&\frac{(r-1)^2}{2(r^{g+1}-1)}\sum_{i=0}^{g-1}r^i(\frac{r-1}{(r^{g-i+1}+i(r-1)-1)^2}\\
&+&\frac{1}{(i+1)^2}).
\end{eqnarray*}
\end{teo}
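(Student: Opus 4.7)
The proof plan is to apply the spectral formula Eq.~(\ref{h2}) directly, using the complete Laplacian spectrum given by Corollary~\ref{maincor}. Since every nonzero Laplacian eigenvalue of $M_g^r$ has been identified together with its multiplicity in the previous section, the argument reduces to a bookkeeping computation; no new structural insight is required beyond what Theorem~\ref{mainteo} and Corollary~\ref{maincor} already provide.

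Concretely, I would proceed as follows. First, recall that $n = N_g = \frac{r^{g+1}-1}{r-1}$, so that the prefactor in Eq.~(\ref{h2}) becomes $\frac{1}{2n} = \frac{r-1}{2(r^{g+1}-1)}$. Next, partition the sum $\sum_{i=2}^{N_g} \lambda_i^{-2}$ according to the two families of eigenvalues listed in Corollary~\ref{maincor}: for each $i \in \{0,1,\ldots,g-1\}$, the value $i+1$ occurs with multiplicity $(r-1)r^i$, and the value $\frac{r^{g-i+1}-1}{r-1} + i$ occurs with multiplicity $r^i$. This yields
\begin{equation*}
\sum_{j=2}^{N_g} \frac{1}{\lambda_j^2} = \sum_{i=0}^{g-1} \left[ \frac{(r-1)r^i}{(i+1)^2} + \frac{r^i}{\bigl(\tfrac{r^{g-i+1}-1}{r-1}+i\bigr)^2} \right].
\end{equation*}

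The only nontrivial manipulation is to rewrite the second denominator as a single fraction: $\tfrac{r^{g-i+1}-1}{r-1}+i = \tfrac{r^{g-i+1}+i(r-1)-1}{r-1}$, so that after squaring and inverting, a factor $(r-1)^2$ is produced in the numerator. Collecting the common factor $(r-1)$ from the first summand with one factor of $(r-1)$ coming from the $\tfrac{1}{2n}$ prefactor then produces the overall coefficient $\frac{(r-1)^2}{2(r^{g+1}-1)}$, and one extra $(r-1)$ remains attached to the second term inside the bracket. Combining everything yields exactly the closed form stated in the theorem.

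I do not anticipate a genuine obstacle here: the only thing that can go wrong is a miscount of multiplicities or a slip in the algebraic normalization of the denominator $\frac{r^{g-i+1}-1}{r-1}+i$. A quick sanity check, however, is available: summing $(r-1)r^i + r^i = r^{i+1}$ over $i=0,\ldots,g-1$ gives $\frac{r^{g+1}-r}{r-1} = N_g - 1$, confirming that all nonzero eigenvalues are indeed accounted for, so the substitution into Eq.~(\ref{h2}) is valid.
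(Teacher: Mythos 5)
Your proposal is correct and follows exactly the route the paper takes: the paper derives Theorem~\ref{second} by substituting the spectrum from Corollary~\ref{maincor} directly into Eq.~(\ref{h2}), with the same normalization of the denominator $\frac{r^{g-i+1}-1}{r-1}+i$ producing the $(r-1)^2$ factor. Your multiplicity sanity check ($\sum_{i=0}^{g-1} r^{i+1} = N_g - 1$) is a nice addition, and the bookkeeping is accurate.
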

\begin{figure}[ht]
\begin{center}
\includegraphics[width=0.9\linewidth]{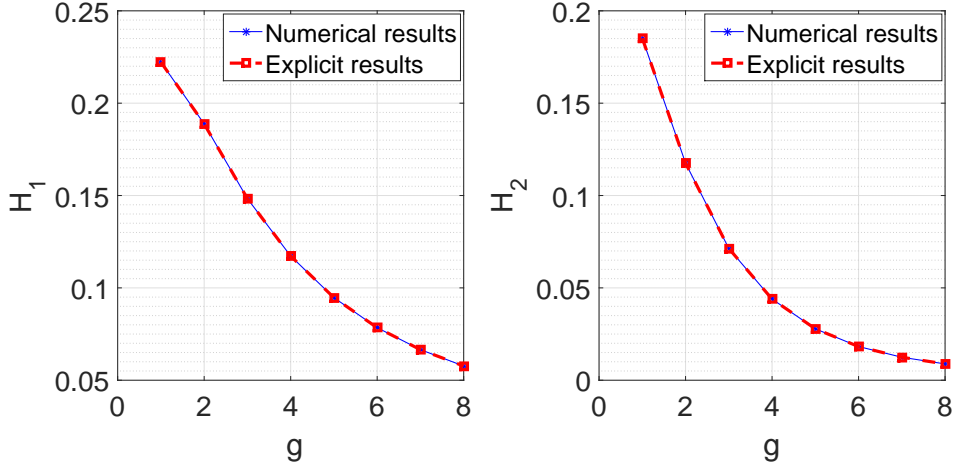}
\end{center}
\caption{The numerical results coincide with the theoretical results when $r=2$.}
\label{num}
\end{figure}
\begin{figure}[ht]
\begin{center}
\includegraphics[width=0.9\linewidth]{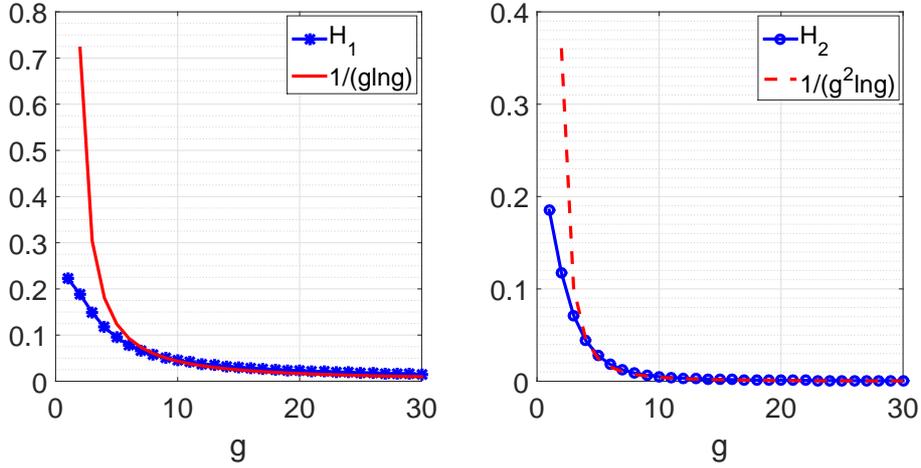}
\end{center}
\caption{Scales of $H_1$ and $H_2$.}
\label{scale}
\end{figure}

Fig.~\ref{num} shows the theoretical results coincide with the numerical results when $r=2$. The theoretical values are obtained from Theorem~\ref{first} and Theorem~\ref{second}. The numerical results are derived from Eqs.~(\ref{h1}) and~(\ref{h2}) by calculating the corresponding Laplacian eigenvalues directly. Since $N_g=\frac{r^{g+1}-1}{r-1}$, we have $g=\frac{ln((r-1)N_g+1)}{lnr}-1$. Thus, as $N_g\rightarrow \infty$, from the numerical results showed in Fig.~\ref{scale} we find that the scaling of the network coherence with network order $N$, that is, $H_1\sim \frac{1}{lnNlnlnN}$ and $H_2\sim \frac{1}{(lnN)^2lnlnN}$. This result shows that the hierarchical small-world network $M_g^r$ has the best performance for noisy consensus dynamics among sparse graphs. Therefore, we have provided an answer to the above two open questions.

\subsection{How the structural characteristics affect the network coherence}
\begin{table*}[!t]
\centering
\caption{Scalings of the first-order coherence for some typical networks.}
\label{t2}
\begin{tabular}{c|c|c|c}
\toprule[1pt]
Network                                          & $\Delta$   &  $\mu$     &  $H_1$                   \\
\midrule
Hierarchical SW network                            &    $N$         & 2       &  $(lnNln(lnN))^{-1}$       \\
\midrule
Star graph~\cite{Young10}                        &    $N$          & 2     & $1$           \\
\midrule
Pseudofractal scale-free web~\cite{Yi18}         &    $N$         &$lnN$      & $1$        \\
\midrule
4-clique motif network~\cite{Yi18}               &     $N$          &$lnN$     & $1$ \\
\midrule
Koch graph~\cite{Yi17}                           &     $N$           &$lnN$    &  $lnN$\\
\midrule
Farey graph~\cite{Yi15}                          &     $lnN$         &$lnN$    & $lnN$    \\
\bottomrule[1pt]
\end{tabular}
\end{table*}

Since the scale of the second-order coherence can be predicted by the first-order coherence, we now just study the effect of the network topologies on the first-order network coherence which has been extensively studied in previous works~\cite{Young10,Patterson14,Yi18}. Fig.~\ref{r} shows that the differences between the network coherence with different $r$ are very small when $g$ is large enough. Hence, the effect of the parameter $r$ is very limited.

\begin{figure}[ht]
\begin{center}
\includegraphics[width=0.9\linewidth]{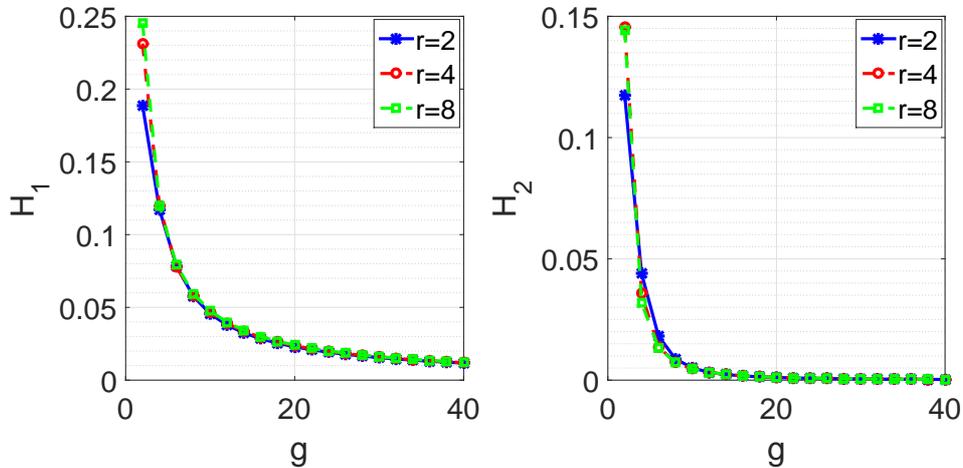}
\end{center}
\caption{Network coherence under different values of $r$.}
\label{r}
\end{figure}

Yi, Zhang and Patterson~\cite{Yi18} have given two bounds for the fist-order coherence in terms of the average path length $\mu$ and the average degree $\langle k\rangle$ of a graph. They proved that
 \begin{equation*}
 \frac{1}{2\langle k\rangle}\leq H_1\leq \frac{1}{4}\mu.
 \end{equation*}
 So networks with constant $\mu$ must have limited $H_1$, and the first-order coherence of networks with constant $\langle k\rangle$ can not be $0$. For example, for a large network order, the complete graph $K_N$, the star~$X_N$ and the hierarchical small-world network have constant $H_1$ and the first-order coherence of the star graph $X_N$ is a non-zero constant, see Table~\ref{t2}.
But when $\mu$ is an increasing function of the order $N$, we can not estimate the scale of $H_1$ from the upper bound.

From Eq.~(\ref{h1}), we can find that the algebraic connectivity $\lambda_2$ plays an important role in determining the value of $H_1$. If $\lambda_2$ is close to $0$, $H_1$ will become very large. From the classical Fiedler inequality~(\ref{Fied}), we know that $\lambda_2$ is bounded by the vertex connectivity $c_v$ and the edge connectivity $c_e$ which measure respectively the robustness to vertex and edge failure. So graphs with large vertex connectivity and edge connectivity tend to have small $H_1$. In other words, high robustness to vertex and edge failures mean high robustness against uncertain disturbance.
For example, the Koch network has the same scale of the maximum degree $\Delta$ and the average path length $\mu$ as the pseudofractal scale-free web, but the scale of the first-order coherence $H_1$ in the Koch network is very high, see Table~\ref{t2}. This is because the Koch network is not robust to vertex failure.

The Kirchhoff index $R(G)=N\sum_{i=2}^N\frac{1}{\lambda_i}$ is a famous and important graph invariant~\cite{palacios}. The relation between $R(G)$ and $H_1$ is given by $H_1=\frac{R(G)}{2N^2}$. We have the following bounds for $R(G)$~\cite{palacios},
\begin{equation*}
\frac{N}{\Delta}\sum_{i=2}^N\frac{1}{1-\theta_i}\leq R(G) \leq \frac{N}{\delta}\sum_{i=2}^N\frac{1}{1-\theta_i},
\end{equation*}
where $\theta_i$ are the eigenvalues of the transition matrix $P$ defined in Section~\ref{pre}. Thus, we have two new bounds for $H_1$,
\begin{equation*}
\frac{1}{2N\Delta}\sum_{i=2}^N\frac{1}{1-\theta_i}\leq H_1 \leq \frac{1}{2N\delta}\sum_{i=2}^N\frac{1}{1-\theta_i}.
\end{equation*}
Hence, the maximum degree is a good predictor for $H_1$. Small $\Delta$ means large $H_1$. For example, the Farey graph has the same scale of the average path length as the pseudofractal scale-free web, but the scale of $H_1$ in the Farey graph is much larger. This is due to the scale of the maximum degree in the Farey graph is low, see Table~\ref{t2}.

\section{Conclusion}\label{con}
In this paper, by applying the spectral graph theory, we studied three important aspects of consensus problems in a hierarchical small-world network which is a real-life network model. Compared with several previous studies, consensus algorithm in the hierarchical small-world network converges faster but less robust to communication time delay. It is worth mentioning that the hierarchical small-world network has optimal network coherence which captures the robustness of consensus algorithms when the agents are subject to external perturbations. These results provide a positive answer to two open questions of Yi, Zhang and Patterson~\cite{Yi18}. Finally, we argue that some particular network structure characteristics, such as large maximum degree, small average path length, and large vertex and edge connectivity, are responsible for the strong robustness with respect to external perturbations.

\section*{Acknowledgment}

This work was supported by Normandie region France and the XTerm ERDF project (European Regional Development Fund) on Complex Networks and Applications, the National Natural Science Foundation of China (No. 11971164), the Natural Science Foundation of Hunan Province (Nos. 2018JJ3255), the Scientific Research Fund of Hunan Province Education Department (Nos. 19B313,19B319).

\end{document}